\newcommand{\Raz}[1]{{\color{blue} #1}}
\newcommand{\bbR}{{\mathbb R}}
\newcommand{\brk}[1]{\left(#1\right)}          
\newcommand{\mymat}[1]{\begin{pmatrix} #1 \end{pmatrix}}
\newcommand{\deriv}[2]{\frac{d#1}{d#2}}
\newcommand{\dderiv}[2]{\frac{d^2#1}{d#2^2}}
\newcommand{\pd}[2]{\frac{\partial#1}{\partial#2}}
\newcommand{\pdd}[2]{\frac{\partial^2#1}{\partial#2^2}}
\newcommand{\secref}[1]{Section~\ref{#1}}
\newcommand{\figref}[1]{Figure~\ref{#1}}
\newcommand{\beq}{\begin{equation}}
\newcommand{\eeq}{\end{equation}}
\newcommand{\bsplit}{\begin{split}}
\newcommand{\esplit}{\end{split}}
\newcommand{\baligned}{\begin{aligned}}
\newcommand{\ealigned}{\end{aligned}}
\newcounter{sect}
\providecommand{\R}{\bbR}
\newcommand{\textand}{\quad\text{ and }\quad}
\newcommand{\Textand}{\qquad\text{ and }\qquad}
\newcommand{\GG}{\mathcal{G}}
\newcommand{\G}{\mathrm{G}}
\newcommand{\B}{\mathcal{B}}
\renewcommand{\S}{\mathcal{S}}
\newcommand{\vp}{\varphi}
\newcommand{\ks}{\kappa^*TS}
\newcommand{\Q}{\mathcal{Q}}
\newcommand{\W}{\mathcal{W}}
\DeclareMathOperator{\divergence}{div}
\DeclareMathOperator{\Vol}{Vol}
\newcommand{\Emb}{\operatorname{Emb}}
\newcommand{\Hom}{\operatorname{Hom}}
\newcommand{\rmin}{R_{\text{min}}}
\newcommand{\rmax}{R_{\text{max}}}
\DeclareFontFamily{U}{FBB}{}
\DeclareFontShape{U}{FBB}{m}{n}{
   <-> s * [1.0] fourier-bb
   }{}
\DeclareSymbolFont{Ufutm}{U}{FBB}{m}{n}
\DeclareSymbolFontAlphabet{\fourierbb}{Ufutm}
\begin{document}

\title{Continuum Dynamics on Manifolds: \\Application to Elasticity of Residually-Stressed Bodies\thanks{
RK was partially supported by the Israel-US Binational Foundation (Grant No. 2010129), by the Israel Science Foundation (Grant No. 661/13), and by a grant from the Ministry of Science, Technology and Space, Israel and the Russian Foundation for Basic Research, the Russian Federation.
RS was partially supported by 
the H. Greenhill Chair for Theoretical and Applied Mechanics and by the Pearlstone Center for Aeronautical Engineering Studies at Ben-Gurion University.
}
}


\author{Raz Kupferman \and Elihu Olami  \and Reuven Segev
}


\institute{Raz Kupferman \at
              Institute of  Mathematics \\ 
		The Hebrew University \\
		Jerusalem 91904 Israel \\
                \email{raz@math.huji.ac.il}       
           \and
           Elihu Olami \at
              Institute of  Mathematics \\ 
		The Hebrew University \\
		Jerusalem 91904 Israel \\
                \email{elikolami@gmail.com}
              \and
Reuven Segev \at
Department of Mechanical Engineering \\
Gen-Gurion University of the Negev \\
Beer-Sheva 84105 Israel                \\
                \email{rsegev@post.bgu.ac.il}
}

\date{Received: date / Accepted: date}

\maketitle

\begin{abstract}
This paper is concerned with the dynamics of continua on differentiable manifolds. 
We present a covariant derivation of equations of motion, viewing motion as a curve in an infinite-dimensional Banach space of embeddings of a body manifold in a space manifold. Our main application is the motion of residually-stressed  elastic bodies; residual stress results from a geometric incompatibility between body and space manifolds. 
We then study a particular example of elastic vibrations of a two-dimensional curved annulus embedded in a sphere.

\keywords{Elastodynamics \and Residual stress \and Riemannian manifolds}
\end{abstract}

\section{Introduction}

In the past decade, there has been a renewed interest in the mechanics of residually-stressed elastic materials. This recent activity encompasses a wide scope of branches, ranging from the natural sciences (e.g., \cite{FSDM05,LM09,AAESK12}), through engineering applications (e.g., \cite{KES07}) and up to pure mathematical questions. In the latter context, we mention the derivation of dimensionally-reduced plate, shell and rod models \cite{LP10,KS14,KM14}, and homogenization theories for topological defects \cite{KM15,KM16,KMR16}.   

Mathematically, certain residually-stressed elastic bodies may be modeled as smooth manifolds endowed with a Riemannian metric; the metric represents local equilibrium distances between neighboring material elements. A configuration is an embedding of the body manifold into the ambient Euclidean space. The elastic energy associated with a configuration is a measure of mismatch between the intrinsic metric of the body and its ``actual" metric---the pullback of the Euclidean metric by the configuration. The property of being residually-stressed is  a \emph{geometric incompatibility}, reflected,  in the traditional Euclidean settings, by the non-flatness of the intrinsic material metric. Incompatible elasticity has a longstanding history, starting with the pioneering work of Kondo \cite{Kon55}, Nye \cite{Nye53}, Bilby \cite{BBS55} and Kr\"oner \cite{Kro81}. The above mentioned recent work extends significantly the scope of applications,  provides a wealth of novel analytical tools, and raises new questions.

Almost all the existing work on residually-stressed bodies is restricted to static problems of hyperelasticity: one postulates the existence of an energy function and the equilibrium configuration is a minimizer of that energy. In contrast, very little work exists on elastodynamics residually-stressed bodies. 

Thus far, there have been several dominant approaches for covariant theories of elastodynamics:
\begin{enumerate}

\item Balance laws for extensive observables, such as mass, momentum and energy: these laws 
 are postulated along with invariances under certain types of spatial diffeomorphisms; 
see for example the Green-Rivlin theorem \cite{GR64}  and its covariant generalization by  Marsden \cite{MH83}. 
Under certain regularity assumptions, the balance equations give rise to local differential transport equations; see Marsden and Hughes \cite{MH83} and Yavari, Marsden and Ortiz \cite{YMO06}.  

\item  Field theoretic approaches: one postulates the existence of an energy function, or a Lagrangian density function $\mathcal{W}$, which depends on both intrinsic and ``actual" metrics of the body. The dynamical solution (which is a motion) is the minimizer of the corresponding energy functional; see e.g.  Ebin \cite{Ebi93}, Marsden and Hughes \cite{MH83} and Yavari  and Marsden \cite{YM12}.

\item Dynamics are viewed as statics in $4$-dimensional space-time. See for example Appleby and Kadianakis \cite{AK86}.
\end{enumerate}

In the formulations based on the Green-Rivlin theorem, as presented in Marsden and Hughes \cite{MH83}, the form of the energy balance has to be assumed a-priori. Moreover, one has to assume the existence of an elastic energy. Such an approach is somewhat inconsistent with the traditional approach in continuum mechanics, according to which balance laws have to be formulated independently of constitutive theory. This approach also restricts the theory to hyperelastic systems. The same comment applies to field theories based on a predefined form of the Lagrangian. 

In this paper, we present a global approach to continuum dynamics, with particular relevance to elastodynamics. Our main application is geometrically incompatible elastic media. Our formulation is a generalization of Newton's classical mechanics to the infinite-dimensional continuum context. It applies to a rather general class of problems, including non-conservative systems and singular systems (e.g., forces and stresses are allowed to be measure-valued). 

Writing  the laws of dynamics requires a specification of  a geometric model of space-time. Here, space-time has a particularly simple structure: a Cartesian product $\S\times I$ of an $m$-dimensional space manifold $\S$ and a time interval  $I\subset\R$. Thus, given a compact $d$-dimensional body manifold $\B$, a natural choice for the configuration space, which we denote  by $\Q$ , is the space $\Q=\Emb^1(\B,\S)$ of $C^1$ embeddings of $\B$ in $\S$. A motion of the body $\B$ in $\S$ is a curve $\vp:I\to\Q$. 

As $\S$ is generally not a linear space, neither is $\Q$. However, $\Q$ turns out to be an (infinite dimensional) Banach manifold. The tangent space of $\Q$ at a configuration $\kappa$  is identified with the Banachable space of vector fields along $\kappa$ 
\[
T_\kappa\Q\simeq C^r(\kappa^*T\S)\simeq\{\xi:\B\to T\S\,|\,\pi_\S\circ\xi=\kappa\},
\]
 where $\pi_\S:T\S\to \S$ is the tangent bundle projection.  Consequently, a  generalized velocity at a configuration $\kappa$ is modeled by a vector field along $\kappa$, $v\in C^1(\ks)\simeq T_\kappa\Q$, whereas  a generalized force is modeled by a linear functional  $f\in (C^1(\ks))^*\simeq T_\kappa^*\Q$. The action $f(v)$ is interpreted as virtual power or virtual work.

The dynamics of a system are induced by a Riemannian metric $\GG$ and a connection $\nabla^\Q$ on $T\Q$. The metric assigns to a generalized velocity the corresponding generalized momentum and the connection enables one to view the rate of change of the momentum as an element of $T^*\Q$. Thus, the dynamic law, which is a generalization of Newton's second law, states that the total generalized force is equal to the covariant derivative of the momentum with respect to time.
 
As shown in \cite{Seg86}, since the topology of $\Q$ takes into account first derivatives, so do the forces in $T^*\Q$;  a generalized force $f\in T^*_\kappa\Q$ may be represented as a function depending linearly on generalized velocities and their first derivatives. In other words, there exists a non-unique stress measure $\sigma$  satisfying the principle of virtual work, 
\[
f(v)=\sigma(j^1(v))
\]
for all generalized velocities $v$. Here, $j^1$ is the jet extension mapping of velocity fields, which is the invariant representation of the value of a vector field along  with its first derivative (the local representation of the jet extension is presented below).
Using the dual of the jet mapping, one can write
\[
f=j^{1*}\sigma=\sigma\circ j^1.
\]
The representation of forces by stresses is a pure mathematical result based on the Hahn-Banach theorem and the Riesz representation theorem of functionals by measures. In particular, it does not involve any physical notions such as balance of forces, equilibrium, external forces and internal forces. 

Traditional formulations of the dynamic law for continuous bodies are formulated in terms of the resultants of the external forces, which are integrals of force densities over their domain of definition (e.g., \cite[p.~170]{Tru91}). Such formulations are not possible in the geometric setting of manifolds, where forces are defined only in the context of their actions on virtual velocity fields and where ``rigid'' velocity fields are not defined.

We can now make the traditional assumption that the total force $f_T$ acting on a body is the sum of the total external force $f_E$ and the total internal force  $f_I$. Typically, the external force is given by a loading section $\Q\to T^*\Q$ and the internal force is represented by a stress $\sigma$, which, in turn, is usually determined by a constitutive relation. Thus,
\[
f_T=f_E - j^{_1*}\sigma.
\]
 The dynamics law proposed in \secref{sec:covariant_continuum_dynamics} is
\[
f_E - j^{1*}\sigma=\frac{DP}{dt},
\]
where $P=\GG(V,\cdot)$ is the generalized momentum, that is, the dual pairing of the velocity $V$; ${DP}/{dt}$ is the covariant derivative of $P$ along the motion. This law is equivalent to the principle of virtual work
\[
f_E(w) - \sigma(j^1 w)=\frac{DP}{dt}(w)
\]
(\emph{cf.} \cite[p.~168]{MH83}).

As a main application for this theory, we investigate the dynamics of residually-stressed hyperelastic materials. We consider a quadratic hyperelastic constitutive model with free boundary. We write the equations of motion in explicit form, yielding a nonlinear wave equation. This example demonstrates one of the peculiarities of continuum mechanics on manifolds. On a manifold, one cannot disassociate the derivative of a vector field from its value. Consequently, the stress field contains, in addition to a term dual to the derivative of the virtual velocity field, a term dual to the virtual velocity field itself. This term, sometimes referred to as the \emph{self force}, vanishes in our example if and only if the spatial metric $g$ is Euclidian  (see \cite{Cap89}). 

As a particular system, we consider the case where $\B$ and $\S$ are two-dimensional, azimuthally symmetric annuli of different constant curvatures. Recently, such systems were studied experimentally by Aharoni et al. \cite{AKMMS16}.
We present numerical calculations displaying nonlinear waves for the case of a spherical annulus embedded in a sphere of different radius. 

The structure of the paper is as follows:
We start \secref{sec:mathframework} with a brief description of classical mechanics in a covariant setting. 
In \secref{sec:Geometric}, we present the geometric structure of the configuration space $\Q$, and introduce the representation of forces by stresses in both singular and smooth settings. 
In \secref{sec:covariant_continuum_dynamics} we formulate Newton's second law for continuum dynamics. To this end, one needs  a metric and a connection for $\Q$; these are defined in \secref{sec:metric_and_connection} following Eliasson \cite{Eli67}, under the assumptions that a metric and a connection are given on the space manifold $\S$ and that  $\B$ is endowed with a mass density, or a volume form. 
In \secref{sc13} we introduce the constitutive theory. We give special attention to the hyperelastic case, for which we derive explicit expressions in local coordinates. 
\secref{sec:MooneyRivlin} is devoted to a quadratic hyperelastic constitutive model with a free boundary. In \secref{sec:example} we focus on the particular case of an azimuthally symmetric annulus embedded in a sphere and present numerical calculations.

\section{Mathematical framework}\label{sec:mathframework}

Our first goal is to  present a global covariant setting for continuum mechanics, based on a geometric characterization of the infinite-dimensional configuration space.  As a prelude, we reformulate the classical Newtonian mechanics of particle systems in a general, yet fairly simple, covariant form (see Abraham and Marsden \cite{AM87} for a covariant Hamiltonian approach to mechanics).  As mentioned above, our approach is based on the assumption that space-time has the structure of a Cartesian product; in particular, points is space have an invariant meaning independent of time.

\subsection{Covariant description of classical mechanics}

In classical mechanics, the \emph{configuration space} is a smooth $d$-dimensional manifold, which we denote by $\S$. A point in $\S$ represents the positions of all the point particles of the system. A \emph{virtual displacement} at $p\in \S$ is an elements of $T_p\S$, i.e., a tangent vector at $p$. A \emph{force} at $p\in \S$ is an element of $T_p^*\S$, i.e., a cotangent vector at $p$. The action of a force $f\in T_p^*\S$ on a virtual displacement $v\in T_p\S$ yield a scalar, $f(v)$, called a \emph{virtual power}.

A \emph{motion} of the system is a smooth curve $\vp:I  \to \S$
in the configuration space where $I$ is a time interval.
The \emph{velocity} associated with the motion $\vp$ is a map $v:I\to \vp^*T\S$, defined by 
\[
v=\deriv{\vp}{t}
\] 
We adopt here the standard notation whereby $\vp^* T\S$ is the pullback of the vector bundle $T\S$ by $\vp$; 
$\vp^* T\S$ is a vector bundle over $I$, with the fiber $(\vp^*T\S)_t$ identified with the fiber $T_{\vp(t)}\S$.

In order to define the \emph{acceleration}, i.e., in order to differentiate the velocity $v$, we need a connection $\nabla^\S$ on $\S$. The acceleration is then given by
\[
a(t)=\frac{Dv}{dt}=\nabla^\S_vv.
\]
Given a local coordinate system for $\S$,  the connection is represented by Christoffel symbols $\Gamma_{ij}^k$, which are functions on $\R^d$. Let $(\vp^1,\dots,\vp^d): I\to\R^d$ denote the local representative of the motion. Then, the velocity and the acceleration take the respective forms
\[
v^i(t)=\deriv{\vp^i}{t}
\Textand
a^i(t)=\dderiv{\vp^i}{t}+\Gamma^i_{jk}(\vp(t))\deriv{\vp^j}{t}\deriv{\vp^k}{t}.
\]
Here the indexes range between $1$ and $d$, and the Einstein summation convention is assumed. 

Let $F:\S\to T^*\S$ be a \emph{force field}, i.e., a section of the cotangent bundle (a one-form), assigning a force to every configuration. Newton's law states that the total force at the current configuration equals the time derivative of the momentum, or in the case of constant mass, to the product of mass and acceleration. 

In a geometric setting, equating force with acceleration is meaningless, as the acceleration is a tangent vector, whereas the force is a cotangent vector.
To obtain a pairing between the tangent and the cotangent bundles, a Riemannian metric $g$ on $T\S$ is needed. 
Then, the momentum $P:I\to \vp^*T^*\S$ is defined by
\[
P(t) = g_{\vp(t)}(v(t),\cdot).
\]
Newton's equation of motion reads
\[
\frac{DP}{dt}(t)=F(\vp(t)).
\]   
In order to differentiate the momentum, we need a connection on $T^*\S$. Such a connection is canonically induced by the connection on $T\S$. If the metric $g$ does not depend on time and the connection $\nabla^\S$ is metrically-consistent, that is, $\nabla^\S g=0$, then
\[
\frac{DP}{dt}(t) = g_{\vp(t)}\brk{a(t),\cdot}.
\]
In coordinates, Newton's equation  reads
\[
g_{li}(\vp(t))\brk{\dderiv{\vp^i}{t}+\Gamma^i_{jk}(\vp(t))\deriv{\vp^j}{t}\deriv{\vp^k}{t}}=F_l(\vp(t)), \qquad 1\leq l\leq d.
\]
Note that the masses of the particles are incorporated in the metric $g$. 

Take for example a particle of mass $m$ moving in $\S = \R^3$. 
The pairing between the tangent and cotangent bundles is induced by the Euclidean metric, $g_{ij} = m\,\delta_{ij}$ and the (Euclidean flat) connection is given by $\Gamma^i_{jk}= 0$,
leading to the classical ``$F=ma$" equation,
\[
m\, \delta_{ij} a^j(t) =F_i(\vp(t)), \qquad 1\le i\le d.
\] 

Even though classical mechanics views the configuration space as a manifold, 
we observe that there is a one-to-one correspondence between a manifold $\S$ and the space of functions $f:\{p\}\to \S$, where $\{p\}$ is a manifold consisting of a single point. 
In other words, the configuration space can also be viewed as a space of functions between two manifolds (albeit one of which is trivial). 
This perspective is the relevant one when we turn to continuum mechanics; the point ${p}$ is replaced by a body manifold $\B$ and configurations are functions from $\B$ to $\S$. 

\subsection{Geometric setting for continuum mechanics}\label{sec:Geometric}

In this section we present the constructs needed for
a geometric formulation of continuum mechanics; see Segev~\cite{Seg86}.
The  \emph{body manifold} $\B$ is a smooth, compact, $d$-dimensional manifold with corners. 
The \emph{space manifold} $\S$ is a smooth $m$-dimensional manifold without boundary. In most classical applications, $m=3$ and $d=1,2,3$ (rod theories correspond to $d=1$, shell and membrane theories to $d=2$ and bulk theories to $d=3$).

A \emph{configuration of class $r$} is a $C^r$-embedding $\kappa:\B\to\S$ of the body manifold $\B$ in the space manifold $\S$.
The \emph{configuration space},  
\[
\Q=\Emb^r(\B,\S),
\] 
is the space of $C^r$-embeddings of the body in space. We endow $\Q$ with the subspace topology induced from the Whitney $C^r$-topology of $C^r(\B,\S)$; loosely speaking, it is the topology of uniform convergence of all derivatives up to order $r$. The configuration space $\Q$ is not a vector space, since addition is not defined on the manifold $\S$. Moreover, even in the case where the space manifold is a vector space, the set of embeddings is not a vector space. Nevertheless, $\Q$ can be given a structure of an infinite-dimensional Banach manifold---a topological space locally homeomorphic to a Banach space and equipped with a smooth structure (see e.g. Palais \cite{Pal68}, Abraham \cite{Abr64} and Eliasson \cite{Eli67}). 

The standard construction of local charts for $\Q$ relies on the existence of a connection on $\S$ (see Krikorian \cite{Kri72} for an alternative approach). We will henceforth assume that $\S$ is paracompact and therefore admits a partition of unity. Consequently, there exists a connection for $\S$. It is noted however that the differential structure for $\Q$ does not depend on the choice of a connection.  Since we will eventually take $\S$ to be a Riemannian manifold (with the canonical Levi-Civita connection), we will assume the existence of a connection $\nabla^\S$ for $\S$ from the outset.

For every $\kappa\in\Q$, there exists a neighborhood $U_\kappa \subset \Q$ of $\kappa$ and a canonical coordinate chart $\chi: C^r(\kappa^*T\S)\to U_\kappa$, where $C^r(\kappa^*T\S)$ is the Banachable space of  vector fields along $\kappa$ (with the $C^r$ topology), 
\[
C^r(\kappa^*T\S) \simeq \{v\in C^r(\B,T\S)\,\,|\,\, \pi_\S\circ v=\kappa \},
\]
and $\pi_\S:T\S\to \S$ is the projection of the tangent bundle on the base manifold. 
(By a Banachable space, we mean that $C^r(\ks)$ is a topological vector space admitting a (non-canonical) complete compatible norm.)

For $v\in C^r(\kappa^*TS)$, $\chi(v)\in C^r(\B,S)$ is given by 
\[
\chi(v)(p)=\exp(v_p)
\] 
where $\exp(v_p)$ is the value at $t=1$ of the unique geodesic $\gamma:[0,1]\to \S$ satisfying the initial condition $\dot{\gamma}(0)=v_p$ (this is where the connection $\nabla^\S$ enters). 
Thus, for every $\kappa\in \Q$, the tangent space  $T_\kappa \Q$ can be identified with the Banachable space    $C^r(\kappa^*T\S)$. As in the finite-dimensional case, the tangent bundle $T\Q=\cup_{\kappa\in\Q}T_\kappa\Q$ is  the bundle of \emph{virtual displacements}, or \emph{generalized velocities}.

In the sequel, we use the following notational convention: Spaces of $r$-times differentiable functions between two manifolds, e.g., $\B$ and $\S$, are denoted by $C^r(\B,\S)$. Spaces of $r$-times differentiable sections of vector (or more generally, fiber) bundles, e.g., $T\B$, are denoted by $C^r(T\B)$, rather than $C^r(\B,T\B)$.

A \emph{force} of grade $r$ is an element of the cotangent bundle $T^*\Q$.
Let $\kappa\in\Q$ be a configuration.  As for the finite-dimensional case, the action $f(w)$ of a force $f\in T_\kappa^*\Q$ on a virtual displacement, or generalized velocity $w\in T_\kappa\Q$ is called a \emph{virtual power}.

While the definitions thus far may seem identical to the definitions in the previous section, there exist fundamental differences between the finite- and the infinite-dimensional settings. In the finite-dimensional case, every vector space $V$ is isomorphic to its dual $V^*$. Moreover, the topology does not depend on the chosen norm (all norms are equivalent). In infinite dimension, this is no longer true; in particular, the cotangent space $T^*_\kappa\Q\simeq (C^r(\kappa^*TS))^*$ is not diffeomorphic to the tangent space $T_\kappa\Q\simeq C^r(\kappa^*TS)$. This difference has deep analytical implications. In fact, it is the origin of the introduction of stresses.

Given a configuration $\kappa\in\Q$, the cotangent space at $\kappa$ is the space of continuous linear functionals $f:C^r(\ks)\to\R$. As the topology of $\Q$ (and that of the model space $C^r(\ks)$) takes into account all the derivatives up to order $r$, so do continuous linear functionals in $T_\kappa^*\Q$; given a force $f\in T_\kappa^*\Q$ and a virtual displacement $w\in T_\kappa\Q$ at $\kappa$, their pairing $f(w)$ is a linear function of $w$ and its first $r$ derivatives. 

The mathematical construct for encoding information about the value assumed by a function along with its first $r$ derivatives at a point is that of \emph{jets} (see e.g. Saunders \cite{Sau89}). We denote by $J^r(\B,\S)$ the set consisting of points $p$ in $\B$ along with the equivalence class of all functions $\kappa:\B\to\S$ assuming at $p$ the same values in their first $r$ derivatives in some (hence, any) coordinate system. The equivalence class of a function $f$ at a point $p\in\B$ is denoted by $j^r_pf$.  The set $J^r(\B,\S)$ of equivalence classes can be given the structure of a fiber bundle over $\B$, called the $r$-th jet bundle of functions from $\B$ to $\S$.

The notion of a jet bundle is easily understood using a coordinate system. Let $X = (X^1,\dots,X^d)$ and $x = (x^1,\dots,x^m)$ be coordinate systems for $\B$ and $\S$; indexes of coordinates in $\B$ will be denoted by Greek letters, whereas indexes of coordinates in $\S$ will be denoted by Roman letters. An element of $J^r(\B,\S)$ is  represented  locally by the coordinates $X^\alpha$ of a point $X\in\R^d$ in the body manifold,  the coordinates $x^i$ of a point $x\in\R^m$ in the space manifold, and symmetric, multilinear operators,
\[
A_1: \R^d\to\R^m,
\qquad
A_2: \R^d\times\R^d\to\R^m,
\quad
\ldots
\quad
A_r:  \R^d\times\cdots\times\R^d\to\R^m.
\]
representing $x^i_{,\alpha},x^i_{,\alpha_1\alpha_2}$, etc., where commas indicate partial differentiation. Given a function $\kappa\in C^r(\B,\S)$, we denote by $j^r\kappa \in C^0(J^r(\B,\S))$ the section of the $r$-th jet bundle, returning, for every $p\in\B$ the jet defined by $\kappa$ and its first $r$ derivatives at $p$; the section $j^r\kappa$ is called the $r$-th jet prolongation of $\kappa$. In coordinates,  if  $\kappa:\B\to\S$ is represented locally by its components $(\kappa^1,\dots,\kappa^m):\R^d\to\R$, then, the local representation of its $r$-th jet prolongation is
\[
j^r\kappa(X) = (X,\kappa^i(X),\kappa^i_{,\alpha}(X),\dots,\kappa^i_{,\alpha_1\dots \alpha_r}(X)).
\]

Back to the action of a force on a virtual displacement, it follows from the Hahn-Banach theorem that given a force $f\in T_\kappa^*\Q$ , there exists a continuous linear functional $\sigma \in (C^0(J^r(\ks)))^*$ such that for every   virtual displacement $w\in T_\kappa\Q\simeq C^r(\kappa^*T\S)$, the action of a force $f$ on $w$ can be represented as
\beq
f(w) = \sigma(j^rw).
\label{stressrep}
\eeq
We call $\sigma$ a \emph{stress} at $\kappa$, and denote the space $(C^0(J^r(\ks)))^*$ of stresses at $\kappa$ by $\mathfrak{S}_\kappa$. We say that a stress $\sigma$ at $\kappa$ represents the force $f$ if equation \eqref{stressrep} holds for every choice of virtual velocity $w$. Note however, that for a given force $f$, there may be more than one stress representing it. This is referred in continuum mechanics as  \emph{static indeterminacy}. 

In general,  stresses and forces, which are continuous linear functionals on differentiable sections, may be singular. Locally, and in particular, if $\B$ can be covered by a single chart, every stress $\sigma$ can be represented by  a collection $\{\mu_i,\,\mu_i^{\alpha},\ldots,\mu_i^{\alpha_1\ldots \alpha_r}\}$ of measures  by the formula 
\[
\begin{split}
\sigma(j^rw) &= \int_\B w^i\, d\mu_i+\int_\B w^i_{,\alpha} \, d\mu^\alpha_i 
+\cdots +
\int_\B w^i_{,\alpha_1\ldots \alpha_r} \, d\mu_i^{\alpha_1 \ldots \alpha_r}.       
\end{split}
\]

We now restrict ourselves to first grade materials, i.e., $r=1$, which is a conventional modeling assumption in standard continuum mechanics, and in particular in bulk elasticity theory and in tension field theory \cite{Ste90}. Furthermore, we restrict our consideration to smooth stress measures, where $\sigma$ (at some configuration $\kappa$) is given by 
\[
\sigma(j^1w)=\int_\B S(j^1w),
\]
and $S$ is a smooth vector-valued form, which we call the \emph{variational stress density}. 
As shown in \cite{Seg02,Seg13}, we may decompose $S$ into body and surface terms as follows,
\[
\int_\B S(j^1w)= - \int_\B \divergence S(w)+\int_{\partial\B} p_\sigma S(w).
\]
Here $\divergence S$ and $p_\sigma S$ are vector-valued forms,
\[
\begin{gathered}
\divergence S \in \Gamma(\Hom(\ks,\Lambda^dT^*\B)) \\
p_\sigma S \in \Gamma(\Hom(\ks|_{\partial\B},\Lambda^{d-1}T^*\B) ),
\end{gathered}
\]
and for any vector bundle $\pi:E\to M$, $\Gamma(E)$ denotes the space of smooth sections of $E$.

In coordinates, the action of a variational stress on the jet extension of a virtual velocity is of form 
\[
S(j^1w) = (R_i w^i + S_i^\alpha w^i_{,\alpha})\, dX^1\wedge\cdots\wedge dX^d,
\]
where $R_i$ and $S_i^\alpha$ are functions of $X$. The vector-valued forms $\divergence S$ and $p_\sigma S$ are then given by
\[
\begin{split}
\divergence S(w) &= (\divergence S)_i w^i\, dX^1\wedge\cdots\wedge dX^d \\
p_\sigma S(w) &= (p_\sigma S)^\alpha_i w^i\,  \,dX^1\wedge\cdots\wedge \widehat{dX^j}\wedge\cdots\wedge dX^d ,
\end{split}
\]
where
\beq
\label{eq:divS}
(\divergence S)_i = S^\alpha_{i,\alpha} - R_i
\Textand 
(p_\sigma S)_i^\alpha =(-1)^{\alpha-1}S^\alpha_i.
\eeq
Here, the notation $\widehat{dX^j}$ indicates that the term $dX^j$ is omitted from the wedge product;
 in the expression $(-1)^{\alpha-1}S^\alpha_i$ there is no summation over $\alpha$.

The $R$-term in the action of a variational stress does not appear in conventional continuum mechanics. For continuum mechanics on non-flat manifolds, it is sometimes referred to as the self-force, see e.g. Capriz \cite{Cap89}. We will see in \secref{sec:MooneyRivlin} an example in which the $R$ term appears as a consequence of the non-flatness of the ambient space $\S$.  

Let $\kappa\in\Q$. Suppose that a force $f\in T^*_\kappa\Q$ is given by body and surface densities $b\in \Gamma(\Hom(\ks,\Lambda^dT^*\B))$ and $t\in\Gamma(\Hom(\ks|_{\partial\B},\Lambda^{d-1}T^*\partial\B))$, that is, 
\[
f(w)=
\int_\B b(w) +
\int_{\partial\B} t(w),
\]
Then, $f$ is represented by a stress at $\kappa$ with variational stress density $S$,
\[
f(w) = \int_\B S(j^1 w)
\]
if and only if 
\[
\divergence S + b=0 
\Textand  
p_\sigma S|_{\partial\B}=t.
\]
Here, $p_\sigma S|_{\partial\B}=\iota_{\partial\B}^*p_\sigma S$ where $\iota_{\partial\B}:\partial\B\to\B$ is the inclusion.

\section{Covariant continuum dynamics} 
\label{sec:covariant_continuum_dynamics}

In this section we present the equations of motion, generalizing Newton's equations to the continuum setting on manifolds. Newton's second law states that the time derivative of the momentum equals the total force acting on the body.  We  start by describing the total force acting on a body. We derive expressions for the momentum of a motion and its covariant derivative, given a general connection and a (possibly time-dependent) metric on $\Q$. With the proper notions at hand, the equations of motion are postulated. We conclude the section by constructing a metric and a connection for $Q$ in the case where $\B$ is endowed with a mass form and $\S$ is a Riemannian manifold. 

\subsection{Force, momentum and Newtons second law}
\label{sec:force}

In classical mechanics, the total force is commonly divided into two components: external forces representing ambient  fields, and internal forces representing interactions among the particles composing the system. In continuum mechanics, the force is divided into two components as well:
Let us fix a configuration $\kappa$ of the body in space. We assume that the forces are given by an external force $f_E$ and an internal force $f_I$, such that the total force $f_T\in T^*_\kappa\Q$ is given by
\[
f_T=f_E-f_I.
\]
The reason for the negative sign in front of the internal force is that we view the internal forces as exerted \emph{by} the mass distribution. Thus, the forces acting \emph{on} the mass distribution appear with a negative sign.

Let $\sigma\in \mathfrak{S}_\kappa$ be a stress representing the internal forces, that is,
\[
f_I=j^{r*}(\sigma)=\sigma\circ j^r.
\]
Typically, $\sigma$ will be determined by a constitutive relation. The total force acting on a body is
\[
f_T=f_E-j^{r*}(\sigma).
\]
Note that when the total force vanishes (i.e., in static equilibrium), 
the stress $\sigma$ represents the external force.

We further note that when the ambient space is Euclidean, (see Truesdell \cite{Tru91})  one formulates the dynamic laws in terms of a resultant force, a notion that has no counterpart for manifolds. 
It is possible, in the case of Euclidean spaces, to formulate the law for the external forces only because the work of the stresses for ``uniform'' velocity fields vanishes.

Given body and space manifolds, $\B$ and $\S$,
a \emph{motion} of the body is a smooth curve in configuration space,
\[
\vp:I\to \Q,
\]
where $I\subset\R$ is an interval. 
The \emph{velocity} associated with the motion $\vp$ is a map $V:I\to \vp^*T\Q$ defined by
\[
V_t=\left.\deriv{\vp}{t}\right|_t.
\]
For every $t$, $V_t$ is a vector field along $\vp(t)$. 
Given a time-dependent family of metrics $\{\GG(t)\}$ and a connection $\nabla^\Q$ for $\Q$, the momentum, $P:I\to \vp^*T^*\Q$, is the dual pairing of the velocity under the (time-dependent) metric $\GG(t)$, 
\[
P_t=\GG_{\vp(t)}(V_t,\cdot).
\]

The connection $\nabla^\Q$ on $T\Q$ induces a connection $\nabla^{\Q^*}$ on $T^*\Q$ by Leibniz' rule,
\[
(\nabla^{\Q^*}_\xi\Phi)(\eta) = \xi(\Phi(\eta)) - \Phi(\nabla^\Q_\xi\eta),
\qquad
\xi,\eta\in\Gamma(T\Q), 
\quad
\Phi\in\Gamma(T^*\Q).
\]
The \emph{inertial force}, i.e., the left-hand side of Newton's equation is given by
\[
\frac{DP}{dt}=(\nabla^{Q^*}_V P)_t.
\]
If $\GG$ is time-independent and $\nabla^\Q$ is metrically-consistent relative to $\GG$, then Newton's "$ma$" is recovered, namely,
\[
\frac{DP}{dt}= \GG_{\vp(t)}(A_t,\cdot) .
\]
where the \emph{acceleration} $A:I\to\vp^*T\Q$ is defined by $A_t=(\nabla^\Q_VV)_t$.

We now present the law of motion:
Let $\GG$ and $\nabla^\Q$ be as before, and $\vp:I\to\Q$ be a motion of $\B$ in $\S$. Assume that at  time $t\in I$, $\vp(t)$ is subject to a force $f_T=f_E-j^{r*}(\sigma)\in T^*_{\vp(t)}\Q$. Then $\vp(t)$ satisfies
the law of motion
\beq\label{lawofmotion}
\left.\frac{DP}{dt}\right|_t = f_T.
\eeq
Equation \eqref{lawofmotion} is a physical law relating the total force to the rate of change of the momentum; it is not a differential equation. Turning this physical law into a differential equation for the motion requires constitutive assumptions.

\subsection{Metric and connection for $\Q$}
\label{sec:metric_and_connection}

The constructions of metrics and connections on infinite-dimensional manifolds is far more involved than in the finite-dimensional case. Since the configuration space $\Q$ is infinite-dimensional, it lacks a partition of unity, and, it is not a-priori  clear  that there exist (globally defined) metrics and connections
for $\Q$. In this section, we follow Eliasson \cite{Eli67} and Palais \cite{Pal68}: 
we define a metric and a connection for $\Q$ using a canonical construction.

We start by noting that we may view $C^1(\B,T\S)$ as a vector bundle over $C^1(\B,\S)$. For every $f\in C^1(\B,\S)$, 
\[
(C^1(\B,T\S))_f=\{\eta\in C^1(\B,T\S)\,|\, \pi_\S\circ \eta=f   \}.
\]
Moreover, there exists a canonical vector bundle isomorphism 
\beq\label{caniso}
 C^1(\B,T\S)\simeq TC^1(\B,\S)
\eeq
which identifies every $\eta\in C^1(\B,T\S)$ with a tangent vector at $\pi_\S\circ \eta$, that is $\eta\in C^1((\pi_\S\circ\eta)^*T\S)\simeq T_{\pi_\S\circ\eta}C^1(\B,\S)$.

Assume a metric $g$ on $\S$ and a positive, non-vanishing $d$-form $\theta$ on $\B$, which we call the \emph{mass form}.  Using the isomorphism \eqref{caniso}, we define a metric $\GG$ on $\Q$ by 
\beq\label{metricdef}
\GG_\kappa(v,w) = \int_\B g_{\kappa(\cdot)}(v,w)\,\theta,
\eeq
where on the left-hand side we view $v$ and $w$ as elements of $T_\kappa\Q$, and on the right-hand side we view them as elements of $C^r(\kappa^*T\S)$.

The mass density of $\B$ is incorporated in the mass form $\theta$. Locally,
\[
\theta=\rho \, dX^1\wedge\cdots\wedge dX^d,
\] 
where $\rho:\B\to \R_+$ is a mass density function. 
In general, (e.g., for growing bodies), $\rho$ may be time-dependent, inducing a family of time-dependent metrics $\{\GG(t)\}_{t\in I}$ on $\Q$. In cases where $\B$ is endowed with a Riemannian metric $G$, it is natural to define the mass density $\rho$ to be the density of $\theta$ with respect to the Riemannian volume form, i.e.,
\[
\theta=\rho \, \sqrt{\det G} \, dX^1\wedge\cdots\wedge dX^d.
\]

Even more generally, one might consider a metric on $\Q$ of the form 
\[
\GG_\kappa(v,w) = \int_\B g_{\kappa(\cdot)}(v,w)\,\theta+\int_{\partial\B} g_{\kappa(\cdot)}(v,w)\,\theta_\partial,
\]
where $\theta_\partial$ is a surface form on $\partial\B$.
Metrics of this form are relevant to surface dynamics. 
In this paper we consider metrics of the form \eqref{metricdef}, i.e., metrics not having singular boundary contributions. The connection presented below turns out to be metrically-consistent with metrics of that form.

Following Eliasson \cite{Eli67}, we construct a connection for $T\Q$.
We start by defining the notion of connection maps. Let $M$ be a (possibly infinite-dimensional) manifold modeled on a banach space $\widetilde{M}$, and let $\pi_E:E\to M$ be a vector bundle over $M$ with fibers isomorphic to a Banach space $\hat{E}$. An element $e\in E$ is represented locally by a pair $(x,\xi)$, with $x\in \widetilde{M}$ and $\xi\in\hat{E}$. Likewise, an element of the tangent bundle $TE$ of $E$  is represented by a quadruple $(x,\xi,y,\eta)$ with $x,y\in\widetilde{M}$ and $\xi,\eta\in\hat{E}$. 

\begin{definition}
A \emph{connection map} for $E$ is a mapping $K:TE\to E$, which in every coordinate system has a local representative 
\[
\tilde{K}:\widetilde{M}\times \hat{E}\times\widetilde{M}\times\hat{E}\to \widetilde{M}\times\hat{E}
\]
of the  form
\[
\tilde{K}(x,\xi,y,\eta)=(x,\eta+\Gamma(x)(y,\xi)),
\]
 where $\Gamma(x):\tilde{M}\times \hat{E}\to \hat{E}$ is a bilinear transformation called the local connector of $K$ at $x$ (which should not be confused with our use of  $\Gamma$ to denote spaces of sections). 
\end{definition}

In the particular case where $M$ is finite-dimensional and $E=TM$, the local connector $\Gamma$ is given by  Christoffel symbols, 
\[
\Gamma(x)(v^ie_i,w^je_j)=\Gamma^k_{ij}(x)v^iw^j \, e_k.
\] 

Given a connection map $K$, one can define a connection $\nabla$ on $E$ in the following way:
Given a section $\xi\in\Gamma(E)$, set $\nabla\xi=K\circ T\xi\in\Gamma(\Hom(TM,E))$. That is, for $p\in M$ and $w\in T_pM$ 
\[
(\nabla_w\xi)_p=K(T\xi(w))\in E_p.
\]
If a section $\xi$ is represented by $\tilde{\xi}:\tilde{M}\to \hat{E}$, that is, locally $\xi(x)=(x,\tilde{\xi}(x))$ then a simple computation gives
\[
\nabla_w\xi(x)=(x,D\tilde{\xi}(x)(w)+\Gamma(x)(w,\tilde{\xi})).
\]

Turning back to the problem at hand, take $E=T\S$ and assume that a connection map $K^\S:T^2\S\to T\S$  is given, with the corresponding connection denoted by $\nabla^S$. One can then show (see \cite{Eli67} for details) that $K^S$ induces a connection map  
\[
C^1(K^\S):T^2C^1(\B,\S)\simeq C^1(\B,T^2\S)\to  TC^1(\B,\S)\simeq C^1(\B,T\S)
\]
defined by
\[
C^1(K^\S)(A)=K^\S\circ A,\quad A\in C^1(\B,T^2\S).
\]
Denote the restriction of $C^1(K^S)$ to $\Q$ (which is an open subset of $C^1(\B,\S)$) by $K^\Q$, and the corresponding connection $\nabla^\Q$.
For a section $\xi\in\Gamma(T\Q)$, a configuration $\kappa\in\Q$ and a tangent vector $w\in T_\kappa\Q$,
\beq
\label{conformula}
(\nabla^\Q_w\xi)_\kappa=(K^\Q\circ (T\xi)_\kappa)(w)=K^\S \circ\left((T\xi)_\kappa(w) \right).
\eeq
Note that on the right-hand side, $\left((T\xi)_\kappa(w) \right):\B\to T^2\S$ and $K^\S:T^2\S\to T\S$, hence, we obtain indeed a map $\B\to T\S$, i.e., an element of $T\Q$.

Since $\S$ is endowed with a metric $g$, there is a natural choice for $\nabla^\S$---the  Levi-Civita connection. One can show that in this case, $\nabla^\Q$ is metrically-consistent with respect to $\GG(t)$ for every $t\in I$.

Next, we derive for later use a local expression for the inertia term $DP/dt$, using  the metric and the connection defined above. Local coordinate systems for $\Q$ and $T\Q$ are given in terms of differential equations for the exponential map and Jacobi field respectively (see Eliasson \cite{Eli67}) and therefore cannot be given explicitly in the general case. The advantage of working with a connection map $K^\Q$, however, is that the covariant derivative can be calculated pointwise (in $\B$). We can therefore derive explicit expressions for the acceleration in coordinate neighborhoods  of $\B$ and $\S$.

Let $\vp:I\to\Q$ be a motion and let $V=\frac{d\vp}{dt}:I\to \vp^*T\Q$ be its velocity.  The acceleration $A:I\to \vp^*T\Q$ is given by 
\[
A_t=(\nabla^\Q_VV)_t=K^\S\circ (TV(\partial_t))_t.
\]
Let $(X^1,\dots,X^d)$ and $(x^1,\dots,x^m)$ be coordinate systems for $\B$ and $\S$ respectively.  If $\vp$ is represented by a vector of functions $\vp^i:I\times\R^d\to\R$, $1\leq i\leq m$,  then $V$ has a local representation $V^i=\partial\vp^i/\partial t$; for $t\in I$ and $p\in\B$
\[
V_t(p)=\pd{\vp^i}{t}(t,p)\partial_{x^i}.
\]
It follows that $TV(\partial_t)(t,p)\in T^2_{v_t(p)}\S$ is represented locally by 
\[
\brk{\vp^i(t,p),V^i(t,p),V^i(t,p), \pd{V^i}{t}(t,p)}.
\]
By the definition of the connection, $A_t(p)=K^\S(TV(\partial_t)(t,p))$ is represented locally by
\[
A^i(t,p)=\pd{V^i}{t}(t,p)+\Gamma^i_{jk}(\vp(t,p))\, V^j(t,p)\, V^k(t,p),
\]
where $\Gamma^i_{jk}$ are the Christoffel symbols of $\nabla^\S$.

As the inertial force is a one-form on $\Q$ (given by an integral functional), it is not possible to obtain a local expression as we did for the acceleration. However, as the momentum $P$ is given by 
 \[
P=\GG(V,\cdot)=\int_\B g(V,\cdot)\theta,
\]
we obtain (by the metricity of $\nabla^\Q$) that  the inertial force is given by
\beq\label{forceofinertia}
\frac{DP}{dt}=\int_\B g(A,\cdot)\theta+\int_\B g(V,\cdot)\dot{\theta}.
\eeq

It is possible to obtain a local representation of the integrands.
Suppose  that $g$ and $\theta$ are represented locally by 
\[
g=g_{ij}\, dx^i\otimes dx^j 
\Textand 
\theta=\rho \, dX^1\wedge\cdots\wedge dX^d.
\]
Then the integrands in \eqref{forceofinertia}
have the local form 
\[
\begin{split}
g(A_t,&\cdot)\theta(t)+g(V_t,\cdot)\dot{\theta}(t)=\\
&=g_{ij}\left(\rho\left( \frac{\partial^2\vp^i}{\partial t^2}+\Gamma^i_{lk}\frac{\partial\vp^l}{\partial t}   \frac{\partial\vp^k}{\partial t}\right) +\pd{\rho}{t} \deriv{\vp^i}{t}     \right)dx^j\otimes dX^1\wedge\cdots\wedge dX^d.
\end{split}
\]
Note that if $\theta$ does not depend on time, then the inertial force is the dual pairing of the acceleration via $\GG$, that is,
 \[
 \left.\frac{DP}{dt}\right|_t=\int_\B g(A,\cdot)\theta=(\flat^\GG A)_t,
 \]
 where $\flat^\GG:T\Q\to T^*\Q$ is the canonical map induced by $\GG$; unlike the finite-dimensional case, it is not an isomorphism.

\section{Constitutive theory}
\label{sc13}

As  mentioned in \secref{sec:force}, the total force at every configuration is decomposed into external and internal forces. In order to write the  equations of motion, we need to know the dependence of both internal and external forces on the configuration. 
Thus, the following are assumed to be given:
\begin{enumerate}
\item  A \emph{loading}, which is a one form  $\Phi:\Q\to T^*\Q$, assigning to every configuration $\kappa\in\Q$ an external force $\Phi_\kappa\in T_\kappa^*\Q$. 
\item A \emph{constitutive relation} $\Psi:\Q\to \mathfrak{S}$, assigning to every configuration $\kappa\in\Q$ a stress $\Psi_\kappa\in\mathfrak{S}_\kappa$. 
\end{enumerate}

The total force at a given configuration $\kappa\in\Q$ (which is an element of $T^*_\kappa\Q\simeq C^1(\ks)^*$) is  given by 
\[
(f_T)_{\kappa}=\Phi_\kappa-\Psi_\kappa\circ j^1.
\]
The total virtual power performed on a virtual velocity $w\in T_\kappa\Q$ is 
\[
(f_T)_{\kappa}(w)=\Phi_\kappa(w) - \Psi_\kappa(j^1w).
\]
Substituting the total force into \eqref{lawofmotion}, we obtain the equation of motion
\begin{equation}\label{dynamiceq}
\frac{DP}{dt}(w)=\Phi_{\vp(t)}(w)-\Psi_{\vp(t)}(j^1w),\quad \forall t\in I,\, w\in T_{\vp(t)}\Q .
\end{equation}

Generally, the constitutive relation and the loading may be singular, in which case \eqref{dynamiceq} may not have a local differential form. 
In the smooth case, where the external loading $\Phi$ is represented by a body force density $b$ and a surface force density $t$, and the constitutive relation $\Psi$ yields a stress that is represented by a variational stress density $S$, we obtain 
\beq\label{varprinciple}
\begin{split}
\int_\B g_{\vp(t)}(A_t,w)\theta + \int_\B g_{\vp(t)}(V_t,w)\dot{\theta} &= 
\int_\B b_{\vp(t)}(w) + \int_\B \divergence S_{\vp(t)}(w) \\
&+\int_{\partial\B}t_{\vp(t)}(w) - \int_{\partial\B}p_\sigma S_{\vp(t)}|_{\partial\B}(w)
\end{split}
\eeq
for every $t\in I$ and  $w\in T_\vp(t)\Q$, implying the following differential system:
\beq
g_{\vp(t)}(A_t,\cdot)\theta + g_{\vp(t)}(V_t,\cdot)\dot{\theta} = 
b_{\vp(t)} + \divergence S_{\vp(t)},
\label{eq:diff_gen}
\eeq
which is an identity between vector-valued forms in $\B$. The boundary conditions are
\[
t_{\vp(t)} = p_\sigma S_{\vp(t)}|_{\partial\B}.
\]
Equation \eqref{varprinciple} is a covariant version of  the principle of virtual work. 

A configuration $\kappa$ is an \emph{equilibrium configuration} if the total force vanishes, or in other words, if the constant motion $\vp(t)\equiv\kappa$ is a solution of the evolution equation \eqref{dynamiceq}.
The equilibrium condition yields  a boundary value problem,
\[
\begin{aligned}
&\divergence S_\kappa + b_\kappa= 0 &\qquad& \text{in $\B$} \\
& t_\kappa = p_\sigma S_\kappa|_{\partial\B} &\qquad& \text{on $\partial\B$}.
\end{aligned}
\]

\begin{remark}
The force-free equation $DP/dt=0$ may be dissipative if the mass density is time-dependent. 
If the mass density does not depend on time, the force-free equation is
\[
DP/dt=\GG(A,\cdot) = 0. 
\]
Its solution is a geodesic flow of $\B$ in $\S$. This is a covariant version of Newton's law of inertia in non-Euclidean continuum mechanics; every material element in a body free of both internal and external forces moves at constant speed along an $\S$-geodesics.
\end{remark}

\subsection{The generalized hyperelastic case}

A constitutive relation $\Psi$ for a variational stress density $S$ is said to be \emph{hyperelastic}  
if  $S$ is derived from an energy functional in the following way: Let 
\[
\W : J^1(\B,S) \to \R
\]
be an energy density function, and let $U:\Q\to\R$, given by
\[
U(\kappa)=\int_{\B} \W(j^1\kappa)\,\theta,
\]
be the corresponding energy functional. Then, $U$ induces a constitutive relation $(TU)_\kappa=\Psi_\kappa\circ j^1$ for every $\kappa\in\Q$.  The variational stress density $S$ of a hyperelastic system is given by 
\[
S_\kappa = \delta_{j^1\kappa}\W\otimes\theta
\]
where $\delta_{j^1\kappa}\W$ is the fiber derivative of $\W$ along $j^1\kappa$. That is, 
\[
\delta_{j^1\kappa}\W=\delta\W\circ j^1k,
\] 
and $\delta\W$ is the restriction of $T\W$ to the vertical sub-bundle of $TJ^1(\B,\S)$ (no derivatives in the $\B$ directions).

This definition of hyperelasticity is a generalization of the standard concept, in which it is assumed that the energy density only depends on the derivative of the configuration. As pointed out above, in a general geometric setting it is not possible to disassociate the derivative of a map at a point from the value of the map at that point.

In the absence of a loading, that is, in the case of a free motion, the equation of motion \eqref{dynamiceq} takes the form  
\beq\label{dyneqhe}
\frac{DP}{dt}=-(TU)_{\vp(t)}=-\int_\B \delta_{j^1\kappa}\W(\cdot)\theta. 
\eeq
As in classical mechanics we have conservation of energy which is due to the metricity of the connection $\nabla^\Q$ with respect to $\GG$:

\begin{proposition}
Let $\vp:I\to\Q$ be a free motion of a hyperelastic body, and suppose that the metric $\GG$ given by \eqref{metricdef} is time-independent . Define the kinetic energy $E_K:T\Q\to\R$ by $E_K(w)=\frac{1}{2}\GG(w,w)$. Then, 
\[
\frac{d}{dt}\left(E_K(V_t)+U(\vp(t)\right)=0.
\]  
\end{proposition}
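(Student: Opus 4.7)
The plan is the standard energy-balance calculation, adapted to the infinite-dimensional setting: differentiate $E_K(V_t)+U(\vp(t))$ along the motion, use metric consistency of $\nabla^\Q$ to move the derivative inside $\GG$, and then invoke the hyperelastic equation of motion \eqref{dyneqhe} to cancel the two contributions.

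First I would differentiate the kinetic term. Since $E_K(V_t)=\tfrac12 \GG(V_t,V_t)$ and $\GG$ is, by hypothesis, time-independent, metric consistency of $\nabla^\Q$ with $\GG$ gives
\[
\frac{d}{dt}E_K(V_t)=\GG_{\vp(t)}(\nabla^\Q_V V,\, V_t)=\GG_{\vp(t)}(A_t,V_t).
\]
Next, because $\theta$ is time-independent as well, the last identity of \secref{sec:metric_and_connection} specializes to $DP/dt=\flat^\GG A$, so
\[
\frac{d}{dt}E_K(V_t)=\left(\frac{DP}{dt}\right)_t(V_t).
\]

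Second, I would differentiate the potential term. Since $\vp$ is a curve in $\Q$ and $U:\Q\to\R$ is smooth, the chain rule on the Banach manifold $\Q$ yields
\[
\frac{d}{dt}U(\vp(t))=(TU)_{\vp(t)}(V_t).
\]
Now apply \eqref{dyneqhe}, which in the free hyperelastic case reads $DP/dt=-(TU)_{\vp(t)}$, and evaluate both sides on $V_t$:
\[
\left(\frac{DP}{dt}\right)_t(V_t)=-(TU)_{\vp(t)}(V_t)=-\frac{d}{dt}U(\vp(t)).
\]
Combining the two computations gives $\frac{d}{dt}(E_K(V_t)+U(\vp(t)))=0$, which is the claim.

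The only non-routine point is the invocation of metric consistency $\nabla^\Q \GG=0$ in infinite dimensions; the paper has already asserted this for the Eliasson-type connection induced by the Levi-Civita connection of $(\S,g)$ whenever the mass form $\theta$ is time-independent, so no further work is needed there. Similarly, the identification $DP/dt=\flat^\GG A$ under time-independent $\theta$ is the explicit formula derived just above in \secref{sec:metric_and_connection}, and the chain rule applied to $U\circ\vp$ is justified by the smoothness of $U$ as a functional on the Banach manifold $\Q$. Thus no genuine obstacle arises beyond carefully stitching together results already established in the preceding sections.
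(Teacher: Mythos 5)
Your proposal is correct and follows essentially the same route as the paper's own proof: metric consistency of $\nabla^\Q$ to compute $\frac{d}{dt}E_K(V_t)=\GG(A_t,V_t)=\frac{DP}{dt}(V_t)$, the chain rule for $U\circ\vp$, and cancellation via \eqref{dyneqhe}. The only cosmetic difference is that you route the kinetic-term identity explicitly through $DP/dt=\flat^\GG A$, which the paper does implicitly in a single chain of equalities.
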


\begin{proof}
By the chain rule 
\[
\frac{d}{dt}(U\circ\vp)(t)=(TU)_{\vp(t)}\circ \frac{d\vp}{dt}=(TU)_{\vp(t)}(V_t).
\]
As $\nabla^\Q$ is metric with respect to $\GG$ we have 
\[
\frac{d}{dt}(E_K(V_t))=\frac{1}{2}\frac{d}{dt}\GG(V_t,V_t)=\GG((\nabla^\Q_VV)_t,V)=\GG(A_t,V_t)=\frac{DP}{dt}(V_t).
\]
Hence, by \eqref{dyneqhe} 
\[
\frac{d}{dt}\left(E_K(V_t)+U(\vp(t)\right)=\frac{DP}{dt}(V_t)+(TU)_{\vp(t)}(V_t)=0\, .
\]
\end{proof}

Locally, $\W$ is represented by a function $\R^m\times\R^{d\times m}\to\R$, and for every $w\in T_\kappa\Q$
\[
S_\kappa(w^i,w^i_{,\alpha}) = (R_i w^i+S^\alpha_i w^i_{,\alpha}) \, dX^1\wedge\cdots\wedge dX^d,
\]
where
\beq\label{hyperstressformula}
R_i = \rho\,\pd{\W}{x^i}\,(j^1\kappa)
\textand 
S^\alpha_i = \rho\,\pd{\W}{x^i_{,\alpha}}(j^1\kappa).
\eeq

In the absence of external loadings, and with the metric and connection $\GG$ and $\nabla^\Q$ defined as in \secref{sec:metric_and_connection}, the equation of motion for $\vp:I\to\Q$ takes the following form: for every vector field $\xi:I\to T\Q$ along $\vp$, 
\begin{equation}\label{hyperdyn}
\int_\B g(A_t,\xi_t)\theta+\int_\B g(V_t,\xi_t)\dot{\theta} =
\int_B \divergence S_{\vp(t)}(\xi_t) - \int_{\partial\B}p_\sigma S_{\vp(t)}(\xi_t).
\end{equation}
The corresponding differential equation has the local form 
\begin{equation}\label{hyperdyncord}
g_{ij}\brk{\pdd{\vp^i}{t}+\Gamma_{lk}^i\pd{\vp^l}{t}\pd{\vp^k}{t} +
\frac{\dot{\rho}}{\rho}\pd{\vp^i}{t}} =
\frac{1}{\rho}\partial_\alpha\brk{\rho \pd{\W}{x^j_{,\alpha}}(j^1\vp)} - \pd{\W}{x^j}(j^1\vp),
\end{equation}
with boundary conditions
\[
\sum_\alpha(-1)^{\alpha-1}\pd{\W}{x^i_{,\alpha}}(j^1\vp)\,dX^1\wedge\cdots\wedge\widehat{dX^\alpha}\wedge\cdots\wedge dX^d=0 
\qquad \text{on}\,\partial\B.
\]

\section{A quadratic constitutive model}
\label{sec:MooneyRivlin}

In most applications, the body manifold $\B$ of an elastic medium has an intrinsic geometry---a Riemannian metric $G$---and the elastic energy density $\W(j^1_p\kappa)$ is a measure of the local strain: it measures the local distortion induced by the current configuration $\kappa\in\Q$ at the point $p\in\B$. Moreover, the Riemannian metric $G$ induces a natural (time-independent) volume form on $\B$ which we denote by $\Vol_G$. In coordinates, $G = G_{ij}\, dX^i\otimes dX^j$, and 
\[
\Vol_{G} = \sqrt{\det G}\,dX^1\wedge\cdots \wedge dX^d.
\]

A configuration $\kappa\in\Q$ induces on $\B$ a metric $\kappa^\star g$ measuring ``actual" distances and angles in $\B$ induced by its embedding in $\S$; this metric is known in continuum mechanics as the \emph{right Cauchy-Green deformation tensor}. In coordinates, the entries of $\kappa^\star g$ are
\[
(\kappa^\star g)_{\alpha\beta}= (\kappa^*g_{ij})\pd{\kappa^i}{X^\alpha}\pd{\kappa^j}{X^\beta},
\] 
where $\kappa^*g_{ij}(p) = g_{ij}(\kappa(p))$. 

A notational convention: 
we denote by $\kappa^* g$
the section of $\kappa^*(T^*\S\otimes T^*\S)$ obtained by pulling-back $g$ considered as a section of $T^*\S\otimes T^*\S$. This should not be confused with the closely related pullback of $g$ considered as a one-form on $\S$, involving composition with $d\kappa$, which we denote by $\kappa^\star g$.

The deviation $\epsilon = \kappa^\star g-G$ of the actual metric $\kappa^*g$ from the intrinsic metric $G$ at a point is a measure of local strain; it is known (up to a factor of $1/2$) as the \emph{Green-St Venant strain tensor}. The elastic energy density is a function of this strain, vanishing at $p\in\B$ if and only if $(\kappa^\star g)_p = G_p$.

The specific form of the energy density depends on the material under consideration. 
A generalization of Hooke's law assumes an elastic energy density that is quadratic in the strain,
\[
\W(j^1\kappa) = C^{\alpha\beta\gamma\delta} \epsilon_{\alpha\beta} \epsilon_{\gamma\delta},
\]
where $C^{\alpha\beta\gamma\delta}$ is called the \emph{elasticity tensor}. If material isotropy is assumed, then the $d^4$ components of $C^{\alpha\beta\gamma\delta}$ reduce to only two component, 
\[
C^{\alpha\beta\gamma\delta} = \lambda \G^{\alpha\beta} G^{\gamma\delta} + \frac{\mu}{2} \brk{G^{\alpha\gamma} G^{\beta\delta} + G^{\alpha\delta} G^{\beta\gamma}}.
\]
The parameters $\lambda$ and $\mu$ are known in the linearized three-dimensional Euclidean case as Lam\'e first and second coefficients.
For the particular case where $\lambda=0$ and $\mu=1$, the elastic energy reduces to
\[
\W(j^1\kappa)= \|\kappa^\star g-G\|^2,
\]
where the norm $\|\cdot\|$ is induced by the metric $G$.
In coordinates,
\[
\begin{split}
\W(j^1\kappa) &= G^{\alpha \gamma} G^{\beta \delta} ((\kappa^\star g)_{\alpha \beta} - G_{\alpha \beta})((\kappa^\star g)_{\delta\gamma} - G_{\delta\gamma} )\\
&= G^{\alpha \gamma} G^{\beta \delta} 
\brk{(\kappa^*g_{ij})\pd{\kappa^i}{X^\alpha}\pd{\kappa^j}{X^\beta} - G_{\alpha \beta}}
\brk{(\kappa^*g_{lk})\pd{\kappa^l}{X^\delta}\pd{\kappa^l}{X^\gamma} - G_{\delta\gamma}}.
\end{split}
\]
The derivatives of the energy density are
\beq
\label{verder1}
\pd{\W}{x^i_{,\alpha}}(j^1\kappa) = 4 G^{\alpha \gamma} G^{\beta \delta} 
(\kappa^*g_{ij})\pd{\kappa^j}{X^\beta} 
\brk{(\kappa^*g_{lk})\pd{\kappa^l}{X^\delta}\pd{\kappa^k}{X^\gamma} - G_{\delta\gamma}},
\eeq
and
\beq
\label{verder2}
\pd{\W}{x^m}(j^1\kappa) = 2 G^{\alpha \gamma} G^{\beta \delta} 
\brk{\kappa^*\pd{g_{ij}}{x^m}}\pd{\kappa^i}{X^\alpha}\pd{\kappa^j}{X^\beta} 
\brk{(\kappa^*g_{lk})\pd{\kappa^l}{X^\delta}\pd{\kappa^k}{X^\gamma} - G_{\delta\gamma}}.
\eeq
\begin{remark}
The $R_i$ terms in the variational stress density are non-zero since the metric of the ambient space  $g$ has non-trivial spatial derivatives. In conventional elasticity theories, the spatial metric is Euclidian and the $R$ term vanishes.
\end{remark}

We substitute these expression into Equation \eqref{hyperstressformula} for $R_i$ and $S_j^\alpha$, with $\rho = \sqrt{\det G}$, and then into Equation \eqref{eq:divS} for the coordinate representation of $\divergence S$, getting
\beq
\label{divhyper}
(\divergence S)_i = 
\pd{}{X^\alpha}\brk{\sqrt{\det G} \,\pd{\W}{x^i_{,\alpha}}} - \sqrt{\det G}\,\pd{\W}{x^i} .
\eeq

In summary, let $\kappa_0\in\Q$ be an initial configuration and let $v_0\in T_\kappa\Q$ be an initial velocity. 
Assume free boundary conditions. Then, the coordinate form of the equations of motion is 
\beq
\label{decase1}
\sqrt{\det G}\, (\vp^*g_{ij}) \brk{\pdd{\vp^j}{t}+\Gamma^j_{lk}\pd{\vp^l}{t}      
\pd{\vp^k}{t}} =  (\divergence S(\vp))_i,
\eeq
with boundary conditions
\[
\sum_\alpha(-1)^{\alpha-1}S^{\alpha}_i(\vp) \, dX^1\wedge\cdot\wedge \widehat{dX^\alpha}\wedge \cdots\wedge dX^d = 0 \quad \text{on} \,\partial\B,
\]
and initial conditions
\[
\vp_0=\kappa_0, \Textand  \dot{\vp}_0 = v_0.
\]

\section{An example}
\label{sec:example}

Let the body manifold $\B$ be a two-dimensional spherical annulus, with a coordinate system
\[
(R,\Theta) \in [\rmin,\rmax]\times [0,2\pi)
\]
and periodicity in the second coordinate; we take an annulus rather than a disc just in order to avoid the immaterial singularity of the polar coordinate system.

The body manifold is assumed to be endowed with an azimuthally-symmetric metric of the form
\[
G(R,\Theta) = \mymat{1 & 0 \\ 0 & \Phi^2(R)}.
\]

For example, the choice of 
\[
\Phi(R) = \frac{\sin\sqrt{K}R}{\sqrt{K}}
\]
with $K>0$ corresponds to a spherical cap of constant Gaussian curvature $K$, whereas the choice of 
\beq
\Phi(R) = \frac{\sinh\sqrt{K}R}{\sqrt{K}}
\label{eq:sphere}
\eeq
corresponds to a hyperbolic cap of constant Gaussian curvature $(-K)$.

The space manifold is a two-dimensional disc. Let 
\[
(r,\theta) \in [0,\infty)\times [0,2\pi)
\]
be a coordinate system for $\S$, with periodicity in the second coordinate. 
The space manifold is also endowed with an azimuthally-symmetric metric of the form
\[
g(r,\theta) = \mymat{1 & 0 \\ 0 & \phi^2(r)}.
\]
If $(\S,g)$ has positive Gaussian curvature then the range of $r$ must be bounded.

The non-vanishing Christoffel symbols associated with the metric $G$ are 
\[
\Gamma^r_{\theta\theta}(r,\theta) = -\phi(r)\phi'(r)
\textand
\Gamma^\theta_{r\theta}(r,\theta) =  \phi'(r)/\phi(r).
\]

This setting encompasses a large family of elastic systems that have received much interest in recent years, such as spherical caps embedded in the plane, a hyperbolic disc embedded in the plane \cite{ESK09b} or a flat surface embedded on a sphere \cite{KSDM12}.  

All the aforementioned references treat only elastostatics.
The only exception we are aware of is the recent work of Aharoni et al. \cite{AKMMS16}, which studies the motion of a quasi-dimensional reactive gel confined within a thin gap between two non-planar plates (a curved version of a Hele-Shaw plate). This setting mimics a two-dimensional body moving in a non-flat two-dimensional space manifold.
The plates were curved such that the top part has a elliptic geometry and the bottom part has an hyperbolic geometry (\figref{fig:experiment1}).

\begin{figure}
\begin{center}
\includegraphics[height=1.8in]{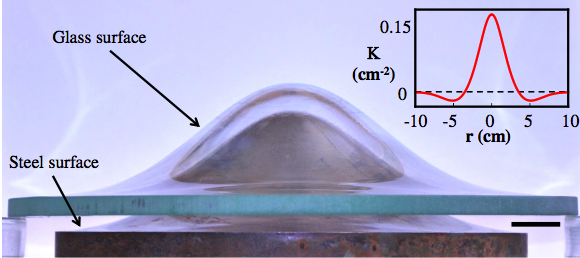}
\end{center}
\caption{The experimental setting in \cite{AKMMS16}: The inset displays the cell’s Gaussian curvature as a function of the radius.}
\label{fig:experiment1}
\end{figure}

This setup is immersed in a temperature-regulated water bath; by controlling the temperature,  the intrinsic curvature of the gel can be modified.  When the curvature of the body changes from hyperbolic to elliptic, the body migrates from the lower portion of the cell to the upper portion (\figref{fig:experiment2}). It should be noted that these experiments correspond to a damped regime, hence cannot be quantitatively compared to our computations below. Yet, unlike Hamiltonian formulations, our approach can account for dissipation. 

\begin{figure}
\begin{center}
\includegraphics[height=2.4in]{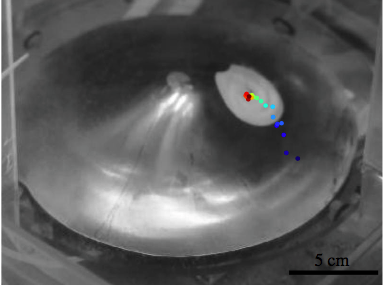}
\end{center}
\caption{The gel in its final, stable position. The dots trace out the trajectory of the gel, from start (blue) to end (red).}
\label{fig:experiment2}
\end{figure}

Consider now a time-dependent configuration preserving the azimuthal symmetry of the system, 
\[
\vp^r(R,\Theta,t) = f(R,t)
\Textand
\vp^\theta(R,\Theta,t) = \Theta,
\]
for some function  $f:[\rmin,\rmax]\times I \to [0,\infty)$.

Substituting this \emph{ansatz} into \eqref{verder1} and \eqref{verder2}, we obtain 
\[
\begin{split}
\pd{\W}{x^r_{,R}}(j^1\vp) &= 4 G^{RR} G^{RR} 
(\vp^*g_{rr})\pd{\vp^r}{R} 
\brk{(\vp^*g_{rr})\pd{\vp^r}{R}\pd{\vp^r}{R} - G_{RR}} = 4 f' \brk{f'^2 - 1},
\end{split}
\]
where $f' = f'(R,t)$ denotes derivation with respect to $R$, 
\[
\begin{split}
\pd{\W}{x^\theta_{,\Theta}}(j^1\vp) &= 4 G^{\Theta \Theta} G^{\Theta\Theta} 
(\vp^*g_{\theta\theta})\pd{\vp^\theta}{\Theta} 
\brk{(\vp^*g_{\theta\theta})\pd{\vp^\theta}{\Theta}\pd{\vp^\theta}{\Theta} - G_{\Theta\Theta}} \\
&= 4\frac{(\phi\circ f)^2}{\Phi^2} \brk{\frac{(\phi\circ f)^2}{\Phi^2}  - 1} ,
\end{split}
\]
and
\[
\begin{split}
\pd{\W}{x^r}(j^1\vp) &= 2 G^{\Theta\Theta} G^{\Theta\Theta} 
\brk{\vp^*\pd{g_{\theta\theta}}{r}}\pd{\vp^\theta}{\Theta}\pd{\vp^\theta}{\Theta} 
\brk{(\vp^*g_{\theta\theta})\pd{\vp^\theta}{\Theta}\pd{\vp^\theta}{\Theta} - G_{\Theta\Theta}} \\
&= \frac{4(\phi\circ f)(\phi'\circ f)}{\Phi^2} \brk{\frac{(\phi\circ f)^2}{\Phi^2}  - 1}.
\end{split}
\]
All the other derivatives are zero.

Substituting into \eqref{divhyper} we obtain the divergence of the stress,
\[
(\divergence S)_r = 
\pd{}{R}\brk{\Phi \,\pd{\W}{x^r_{,R}}} - \phi\,\pd{\W}{x^r} .
\]
\[
(\divergence S)_\theta = \pd{}{\Theta}\brk{\Phi \,\pd{\W}{x^\theta_{,\Theta}}} = 0,
\]
as well as the boundary term
\[
(p_\sigma S)_r = \pd{\W}{x^r_{,R}}.
\]

If $\phi = \Phi$, i.e., the two manifolds are compatible, then $\vp^r(R,t) = R$ is a stationary solution of this boundary value problem corresponding to an isometric embedding of $\B$ into $\S$. Otherwise, no isometric embedding exists, and the stress is non-zero even in the absence of external loads.

Finally, substituting into the equations of motion \eqref{decase1} for $i=r$, we obtain
\beq
\pdd{f}{t} = \frac{1}{\Phi} \pd{}{R}\brk{\Phi \,\pd{\W}{x^r_{,R}}} \Raz{-} \pd{\W}{x^r}.
\label{the_wave_eq}
\eeq
Note that the acceleration in the radial direction is simply a second derivative because we chose semi-geodesic coordinates for both $\B$ and $\S$. 
The boundary conditions are
\[
\pd{\W}{x^r_{,R}} = 0,
\]
which reduce to
\[
f'(\rmin,t) = f'(\rmax,t) = 1.
\]

We next present a particular calculation for a spherical annulus embedded in a sphere.
The radial coordinate $R$ of body manifold $\B$ range from $\rmin=0.2$ to $\rmax=1.0$. The metric is of the form \eqref{eq:sphere} with positive Gaussian curvature $K=2$. The space manifold $\S$ is a sphere with Gaussian curvature $k=0.5$. The curvature discrepancy implies that the body manifold cannot be embedded in the space manifold without stretching its outer part.

In \figref{fig:1} we show the equilibrium configuration. The top figure displays an isometric embedding of the body manifold in three-dimensional Euclidean space. The bottom figure displays an isometric embedding of its equilibrium configuration. Note that while the distance between the outer and inner boundaries in the body manifold is $0.8$, the actual distance between those boundaries at equilibrium is $0.716$. The effect of embedding a spherical annulus on a sphere of lesser curvature is compression.

\begin{figure}[h]
\begin{center}
\includegraphics[height=2.5in]{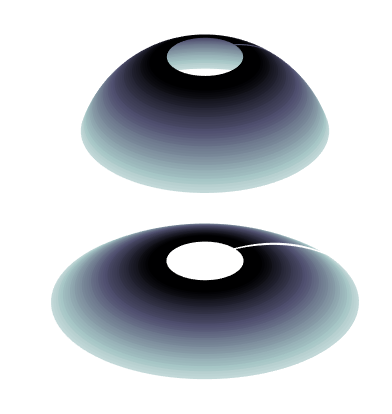}
\end{center}
\caption{Top: Isometric immersion of a spherical annulus of Gaussian curvature $K=2$ in Euclidean space. Bottom: Equilibrium configuration of that same annulus on a sphere of Gaussian radius $k=0.5$.  The stress at equilibrium is non-zero, exhibiting compressive forces in the outer parts.}
\label{fig:1}
\end{figure}

Next, we perturb the equilibrium configuration and solve numerically the nonlinear wave equation \eqref{the_wave_eq}. \figref{fig:2} displays the time evolution of the distance between the inner and outer boundaries over $10$ time units. As expected, we obtain oscillations. Note the multimodal nature of those oscillations, as expected from a nonlinear wave equation. 

\begin{figure}[h]
\begin{center}
\includegraphics[height=2.8in]{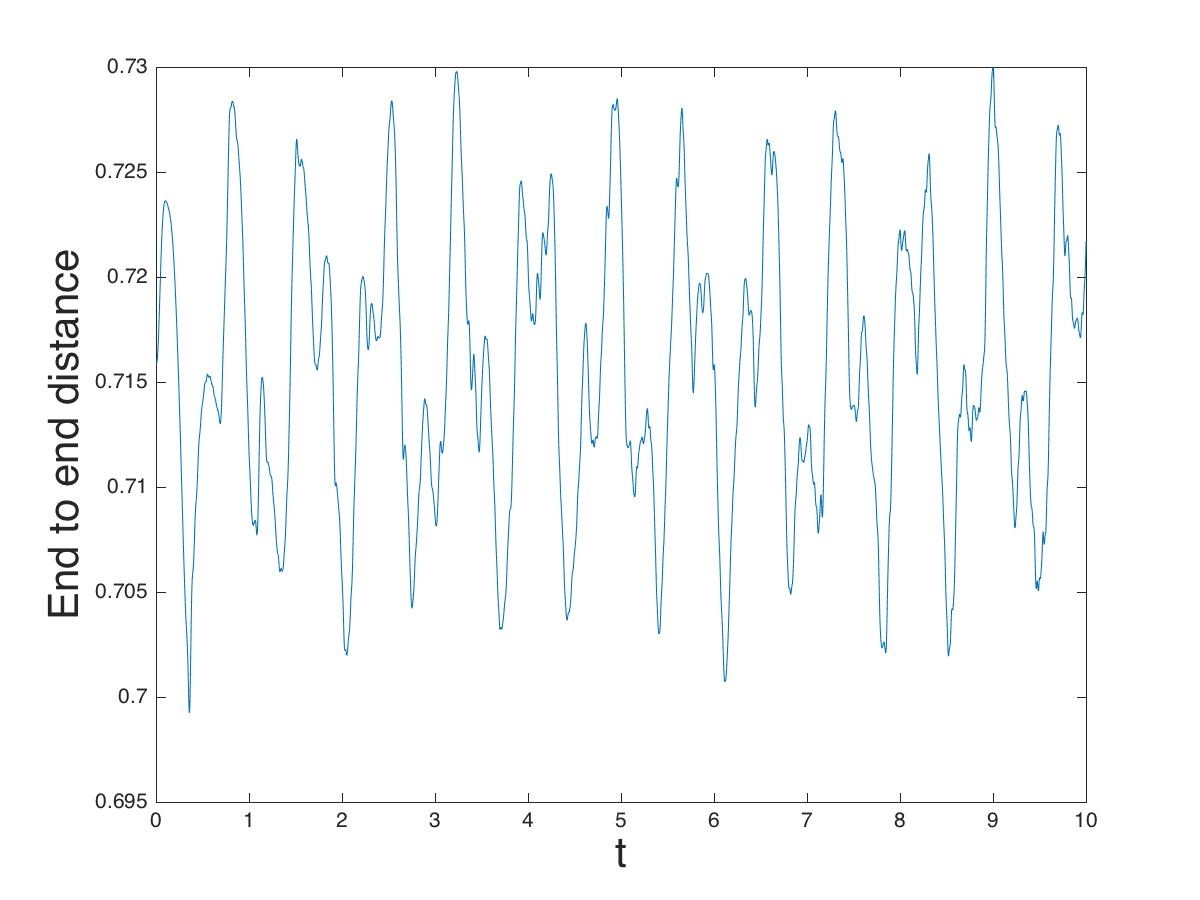}
\end{center}
\caption{Time evolution of the inner-boundary to outer-boundary distance for the system described in Figure~\ref{fig:1}.}
\label{fig:2}
\end{figure}

\begin{acknowledgements}
We are grateful to the author of \cite{AKMMS16} for letting us report their, yet, unpublished results.\end{acknowledgements}

\bibliographystyle{spmpsci}      
\bibliography{/Users/raz/Dropbox/tex/Refs/MyBibs}

\end{document}